\newtheorem{theorem}{Theorem}
\newtheorem{lem}{Lemma}
\begin{document}

\title{Linear-depth quantum circuits for multiqubit controlled gates}
\author{Adenilton J. da Silva}
\email{ajsilva@cin.ufpe.br}
\affiliation{Centro de Inform\'atica, Universidade Federal de Pernambuco, Recife, Pernambuco, 50740-560, Brazil}
\author{Daniel K. Park}
\email{dkd.park@yonsei.ac.kr}
\affiliation{Department of Applied Statistics, Yonsei University, Seoul, Republic of Korea}
\affiliation{Department of Statistics and Data Science, Yonsei University, Seoul, Republic of Korea}

\begin{abstract}
Quantum circuit depth minimization is critical for practical applications of circuit-based quantum computation. In this work, we present a systematic procedure to decompose multiqubit controlled unitary gates, which is essential in many quantum algorithms, to controlled-NOT and single-qubit gates with which the quantum circuit depth only increases linearly with the number of control qubits. Our algorithm does not require any ancillary qubits and achieves a quadratic reduction of the circuit depth against known methods. We show the advantage of our algorithm with proof-of-principle experiments on the IBM quantum cloud platform.
\end{abstract}

\maketitle

\section{Introduction}

Quantum computing offers exciting opportunities for a wide range of computational tasks~\cite{10.2307/2899535,zalka1998simulating,shor1999polynomial,PhysRevLett.103.150502_HHL_qBLAS,PhysRevLett.113.130503_qSQVM,Preskill2018quantumcomputing,peruzzo_variational_2014,McClean_2016,farhi2014quantum,Havlicek2019,ORUS2019100028,hur2021quantum}. 
In the circuit-based quantum computation~\cite{yao1993quantum}, efficient decomposition of quantum gates to an elementary set of one- and two-qubit gates, such as single-qubit rotation gates and the controlled-NOT (CNOT) gate, is a critical step~\cite{barenco1995elementary,saeedi2013linear} for reducing the runtime and the effect of noise~\cite{9475957,arabzadeh_depth-optimized_2013}. In particular, since the size of quantum circuits that can be implemented reliably with the near-term quantum hardware is limited due to noise~\cite{Preskill2018quantumcomputing}, minimizing the quantum circuit depth is crucial for practical applications of quantum algorithms~\cite{bae_quantum_2020}.

For many applications of quantum computing, such as quantum simulation, finance, and machine learning~\cite{PhysRevLett.113.130503_qSQVM,PhysRevX.8.041015,htc,PARK2020126422,Li_2020,Stamatopoulos2020optionpricingusing,blank_quantum-enhanced_2021,blank2022compact, schuld2018supervised}, the ability to implement multiqubit controlled unitary gates efficiently is essential. 
This is especially true for those that require encoding of classical information as probability amplitudes of a quantum state~\cite{1367-2630-17-12-123010,ffqram,9259210}. 
A seminal work presented in Ref.~\cite{barenco1995elementary} provides a systematic procedure for decomposing an $(n+1)$-qubit controlled single-qubit unitary gate, denoted by $C^nU$, to a quantum circuit whose depth and the number of one- and two-qubit gates grow quadratically with $n$. Similar results were reported in Refs.~\cite{iten2016quantum,liu2008analytic}.
In Refs.~\cite{saeedi2013linear,luo2016comment} the authors showed that the circuit depth can be reduced to grow linearly in $n$ without ancilla qubits for the special case of implementing an $n$-qubit controlled $ZX$ gate (i.e. $C^{n}ZX)$, while the number of gates remains to be quadratic in $n$. A Quantum Karnough map was introduced in Ref.~\cite{bae_quantum_2020} to reduce the number of elementary gates by a constant factor for $C^{4}X$ and $C^{5}X$, but it is not clear how to generalize the result for higher-order Toffoli gates or for an arbitrary $C^{n}U$ gate. Several works have been developed to reduce the quantum circuit depth by introducing ancillary space. For example, Ref.~\cite{lanyon_2009} showed that $C^{n}X$ can be implemented with $2n-1$ standard two-qubit gates by harnessing an $(n+1)$-level quantum system that acts as an information carrier during computation. Similarly, Ref.~\cite{he_decompositions_2017} presents the decomposition of $C^{n}X$ using $O(n)$ circuit depth and $O(n)$ elementary gates with an ancilla qubit. However, to our best knowledge, a general scheme for implementing multiqubit controlled single-qubit unitary operation using a circuit with $O(n)$ depth and $O(n^2)$ elementary gates without any ancilla qubits has not been reported. Moreover, Ref.~\cite{luo2016comment} claimed that the linear depth decomposition of $C^{n}ZX$ cannot be generalized to a decomposition of $C^nX$. In this work, we provide such a generalization, which was thought to be impossible. We generalize the result from Ref.~\cite{saeedi2013linear} and present a systematic procedure to construct linear-depth quantum circuits for $C^{n}U$, where $U$ is a $2\times 2$ unitary gate, without using any ancillary space. We also provide an implementation of our algorithm that reduces the circuit depth significantly compared to that given by the gate decomposition package in \texttt{qiskit} version 0.19.2~\cite{Qiskit} and that of Ref.~\cite{PhysRevA.71.052330}. Proof-of-principle experiments implemented on IBM quantum computers support that the circuit depth reduction achieved in this work can improve the reliability of noisy quantum devices for performing quantum algorithms.

\section{Linear-depth decomposition}

We denote $a_jUa_k$ a controlled $U$ gate with control $a_j$ and target $a_k$ and $C^{n}U$ a controlled $U$ gate with controls $a_1, \cdots, a_{n}$, and target $a_{n+1}$. With the $P$ and $Q$ gates defined in Eq.~(\ref{eq:pq}), \citet{saeedi2013linear} showed that the $(n+1)$-qubit $C^{n}R_x(\pi)$ gate  can be decomposed as in Eq.~(\ref{eq:crx})~\cite{luo2016comment}.

\begin{equation}
\begin{split}
P_n = \prod_{k=2}^{n} a_kR_x\left(\frac{\pi}{2^{n-k+1}}\right)a_{n+1} \\
Q_n = \prod_{k=1}^{n-1} C^{k}R_x(\pi)
\end{split}    
\label{eq:pq}
\end{equation}

\begin{equation}
C^{n}R_x(\pi) = Q_n^{\dagger}P_n^\dagger Q_n (a_1 R_x(\pi/2^{n-1}) a_{n+1}) P_n
\label{eq:crx}
\end{equation}

Our modification of the method proposed in Ref.~\cite{saeedi2013linear} occurs in operations applied in the target qubit. We replace the $R_x$ gates in $P_n$  by $k$th roots of operators. We apply a $\sqrt[k]{U}$ instead of $R_x(\pi/k)$ and $\sqrt[k]{U}^\dagger$ instead of $R_x(-\pi/k)$ to apply an $n$-qubit controlled $U$ gate. In a particular case, it is possible to decompose an $n$-qubit Toffoli gate with linear depth. With the operator $P_n(U)$,
$$P_n(U) = \prod_{k=2}^{n} a_k \sqrt[2^{n-k+1}]{U}a_{n+1},$$
the $n$-qubit controlled gate $C^{n}U$ is decomposed as
\begin{equation}
C^{n}U = Q_n^{\dagger}P_n(U)^\dagger Q_n (a_1 \sqrt[2^{n-1}]{U} a_{n+1}) P_n(U).
\label{eq:cru}
\end{equation}

We refer to this decomposition as the linear-depth decomposition (LDD) of $C^{n}U$. The above discussion is summarized in Theorem 1.
\begin{theorem} \label{thm:1} Any ($n$+1)-qubits controlled gate $C^{n}U$ with $U$ in the unitary group $U(2)$ can be decomposed as $Q_n^{\dagger}P_n(U)^\dagger Q_n (a_1 \sqrt[2^{n-1}]{U} a_{n+1}) P_n(U)$.
\end{theorem}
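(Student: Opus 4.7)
The plan is to reduce the claim directly to the already-established identity~(\ref{eq:crx}) by an algebraic substitution inside a one-parameter abelian subgroup of $SU(2)$. Set $V \equiv \sqrt[2^{n-1}]{U}$; then every operator the circuit applies to the target $a_{n+1}$ is an integer power of $V$ (specifically, the $k$-th rung of $P_n(U)$ is $\sqrt[2^{n-k+1}]{U} = V^{2^{k-2}}$, and the centre gate is $a_1 V a_{n+1}$). Because all powers of $V$ mutually commute, the net target action on any computational basis state of the controls is governed by a single integer exponent.

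First I would check that $Q_n = \prod_{k=1}^{n-1} C^{k}R_x(\pi)$ does not touch the target and merely permutes the control basis up to a global phase: since $R_x(\pi) = -iX$, one has $Q_n |x\rangle = e^{i\phi(x)}|\sigma(x)\rangle$ for some classical bijection $\sigma$ on $\{0,1\}^n$. Tracing the proposed circuit on $|x\rangle|t\rangle$ then yields, in order: $V^{a(x)}|t\rangle$ after $P_n(U)$, with $a(x) = \sum_{k=2}^n x_k 2^{k-2}$; then $V^{a(x)+x_1}|t\rangle$ after the central controlled $V$; then $V^{a(x)+x_1-a(\sigma(x))}|t\rangle$ after $Q_n$ followed by $P_n(U)^\dagger$; and finally the controls return to $|x\rangle$ after $Q_n^\dagger$, with the phases $e^{\pm i\phi(x)}$ cancelling because the control register traverses the same classical sequence in reverse. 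The net map is therefore $|x\rangle|t\rangle \mapsto |x\rangle V^{e(x)}|t\rangle$ with $e(x) \equiv a(x)+x_1-a(\sigma(x))$.

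Running the identical trace on the Saeedi circuit of (\ref{eq:crx}) yields the target operator $R_x(\pi e(x)/2^{n-1})$ with the \emph{same} integer $e(x)$, because every step only manipulates additive exponents. By (\ref{eq:crx}) this operator equals $R_x(\pi)$ when $x=(1,\ldots,1)$ and $I$ otherwise; since $|e(x)| \leq 2^{n-1}$ by construction, this pins the integer down uniquely, giving $e(1,\ldots,1) = 2^{n-1}$ and $e(x)=0$ for every other $x$. Substituting back, $V^{e(x)}$ equals $V^{2^{n-1}} = U$ on the all-ones input and $V^{0} = I$ on every other basis state of the controls, which is exactly the action of $C^{n}U$. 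By linearity this identifies the full unitary.

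The step I expect to scrutinise most carefully is the assertion that the Saeedi derivation uses only the commutative additive structure of target exponents, so that the pictorial substitution $R_x(\pi/2^{n-k+1}) \mapsto V^{2^{k-2}}$ carries the entire argument through verbatim. In particular, one must confirm that the $R_x(\pi)$ gates still embedded in $Q_n$ act solely as classical multi-controlled-$X$ operations on the controls, and that the $\pm i$ phases they produce cancel cleanly between $Q_n$ and $Q_n^\dagger$; a careful reread of~\cite{saeedi2013linear,luo2016comment} suffices to verify this. Once granted, the substitution argument above closes the theorem with no additional computation.
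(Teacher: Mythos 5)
Your proposal is correct, and it takes a genuinely different route from the paper. The paper proves Theorem~\ref{thm:1} by a direct case analysis: it traces which controlled roots fire when all controls are one (so the exponents sum to $1$ and the target receives $U$), when $a_1=0$ (so $P_n(U)$ and $P_n(U)^\dagger$ cancel), and when $a_1=1$ but some later control is the first zero (so the surviving root exponents telescope to zero). You instead observe that every target operation is an integer power of $V=\sqrt[2^{n-1}]{U}$, that $Q_n$ acts on the control register as a phased classical permutation $\sigma$ whose phases cancel against $Q_n^\dagger$, and hence that the whole circuit acts as $|x\rangle|t\rangle\mapsto|x\rangle V^{e(x)}|t\rangle$ with $e(x)=a(x)+x_1-a(\sigma(x))$ determined entirely by the control dynamics. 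Since the same $e(x)$ governs the already-established identity~(\ref{eq:crx}) with $V$ replaced by $R_x(\pi/2^{n-1})$, and since $|e(x)|\leq 2^{n-1}$ lets you invert $R_x(\pi e(x)/2^{n-1})\in\{I,R_x(\pi)\}$ uniquely to $e(x)\in\{0,2^{n-1}\}$, you read off the answer without ever computing $\sigma$ or enumerating which gates fire. What your route buys is a uniform treatment of all non-all-ones inputs and immunity to the index bookkeeping that makes the paper's write-up error-prone (its proof conflates $a_j$ with $a_k$ and $a_n$ with $a_{n+1}$ in places); it also makes explicit a hypothesis the paper leaves implicit, namely that all the roots in $P_n(U)$ must be chosen as consistent powers of a single $2^{n-1}$-th root $V$. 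The cost is that your proof is a reduction to~(\ref{eq:crx}) taken as a black box from~\cite{saeedi2013linear,luo2016comment}, whereas the paper's argument is self-contained and would re-derive that special case by setting $U=R_x(\pi)$; the points you flag for scrutiny (that $Q_n$ is classical on the controls up to phases of $-i$, and that those phases cancel with $Q_n^\dagger$ because the intermediate gates preserve the control basis state) are both elementary to verify and do hold.
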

\begin{proof}
If qubits $a_1, \dots, a_{n}$ are equal to 1, all the operators in $P_n(U)$ and $a_1\sqrt[2^{n-1}]{U}a_{n+1}$ will be activated. 
Operator $Q_n$ will modify the values of qubits $a_2 \cdots a_n$ from $\ket{1}$ to $-i\ket{0}$ and the operators in $P_n(U)^\dagger$ will not be activated. 
$Q_n^\dagger$ inverts the action of $Q_n$. 
In this case, the gates $a_1\sqrt[2^{n-1}]{U}a_{n+1}P_n(U)$ are applied and they have the effect of applying a $U$ gate in the qubit $a_{n+1}$ and the other qubits are unchanged.

Now we will verify that if at least one qubit in $\{a_1, \dots, a_{n}\}$ is equal to zero the operator $Q_n^{\dagger}P_n(U)^\dagger Q_n (a_1 \sqrt[2^{n-1}]{U} a_{n+1}) P_n(U)$ will not change the input state. 
If $a_1$ is equal to zero the controlled operators $Q_n$, $Q_n^{\dagger}$ and $(a_1 \sqrt[2^{n-1}]{U} a_{n+1})$ will not be activated. 
The operators $P_n(U)$ and $P_n(U)^\dagger$ will cancel each other.

If $a_1=1$, let $a_j$ be the first qubit where $a_j$ is equal to zero. 
The operations in $P_n(U)$ controlled by $a_2, \dots, a_{j-1}$ will be activated. 
The operation controlled by $a_j$ will not be activated and activation of the rest of the gates in $P_n(U)$ depends of the value of $a_{j+1}, \dots, a_n$. 
After the application of $Q_n$ qubits $a_2, \dots, a_{j-1}$ will have the value $-i\ket{0}$, $a_j$ will have the value $-i\ket{1}$, and $a_{j+1}, \dots, a_n$ are not modified. 
Operations controlled by $a_1, \dots, a_{j-1}$ will not be applied in $P_n(U)^\dagger$; operations controlled by $a_j$ will be activated; and because $a_{j+1}, \dots, a_n$ are not modified the operations controlled by them in $P_n(U)^\dagger$ will cancel out with the operations of $P_n(U)$ with the same controls. 
$Q^\dagger$ inverts $Q$ and the overall action of $Q_n^{\dagger}P_n(U)^\dagger Q_n (a_1 \sqrt[2^{n-1}]{U} a_{n+1}) P_n(U)$ is to apply the following operation in the $(n+1)$th qubit.
$$\sqrt[2^{n-1}]{U} \sqrt[2^{n-1}]{U} \sqrt[2^{n-2}]{U} \sqrt[2^{n-3}]{U} \cdots \sqrt[2^{n-j+2}]{U}\sqrt[2^{n-j+1}]{U}^\dagger = I$$
\end{proof}

\begin{theorem}
$Q_n^{\dagger}P_n(U)^\dagger Q_n (a_1 \sqrt[2^{n-1}]{U} a_{n+1}) P_n(U)$ can be implemented in a circuit with linear depth.
\end{theorem}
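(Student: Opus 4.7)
The plan is to bound the depth of each block in the decomposition $Q_n^\dagger P_n(U)^\dagger Q_n (a_1 \sqrt[2^{n-1}]{U} a_{n+1}) P_n(U)$ separately, and to observe that its overall gate topology is identical to the Saeedi--Pedram / Luo--Luo decomposition of $C^n R_x(\pi)$ in Eq.~(\ref{eq:crx}). The only difference between the two circuits is that the single-qubit rotations $R_x(\pi/2^{n-k+1})$ acting on the target qubit are replaced by $\sqrt[2^{n-k+1}]{U}$. Since depth counts only the data-dependency pattern of the gates and not which specific single-qubit unitary is applied on a given wire, the entire depth analysis from \cite{saeedi2013linear,luo2016comment} should transfer verbatim to the present setting.

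For $P_n(U)$ I would argue as follows. It is a product of $n-1$ controlled-single-qubit gates of the form $a_k \sqrt[2^{n-k+1}]{U} a_{n+1}$, each of which admits a constant-depth realization by the standard Barenco et al. decomposition into two CNOTs and three single-qubit rotations (see~\cite{barenco1995elementary}). Since all $n-1$ of these gates share the target qubit $a_{n+1}$, they cannot be parallelized and must be executed sequentially, yielding depth $O(n)$ for $P_n(U)$ and, by the same reasoning, for $P_n(U)^\dagger$. The central gate $a_1 \sqrt[2^{n-1}]{U} a_{n+1}$ is a single controlled-single-qubit gate and contributes only constant depth, so the only remaining block to handle is $Q_n$.

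The main obstacle is the block $Q_n = \prod_{k=1}^{n-1} C^k R_x(\pi)$, which is independent of $U$ and coincides exactly with the ladder appearing in the Saeedi--Pedram decomposition. I would invoke the linear-depth implementation of this ladder from \cite{saeedi2013linear,luo2016comment} as a black box. The subtle point, and the reason this step cannot be glossed over, is that one cannot simply recurse on Eq.~(\ref{eq:cru}) to decompose each $C^k R_x(\pi)$ inside $Q_n$, since that would induce a super-linear (in fact exponential) recursion in the depth; the linear-depth bound on $Q_n$ instead rests on the specific ladder-level parallelization exploited in the cited works, which allows the $O(n^2)$ constituent gates to be scheduled within $O(n)$ layers. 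Once the $O(n)$ depth of $Q_n$ (and hence of $Q_n^\dagger$) is in hand, summing the five blocks gives $2\cdot O(n) + 2\cdot O(n) + O(1) = O(n)$, establishing the claim.
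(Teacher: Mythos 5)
Your proposal is correct in outline and correctly isolates the crux of the theorem --- the depth of the block $Q_n$ --- but it resolves that crux by citing \cite{saeedi2013linear,luo2016comment} as a black box, whereas the paper proves it from scratch, and the two routes differ in an instructive way. The paper's argument is precisely the recursion you declare impossible: it expands only the outermost factor of $Q_n = Q_{n-1}\,C^{n-1}R_x(\pi)$ via Theorem~\ref{thm:1}, and the $Q_{n-1}^\dagger$ produced by that expansion cancels against the adjacent $Q_{n-1}$, so the recursion telescopes instead of exploding; iterating yields $Q_n$ as a product of $P_k$ ladders ($k=2,\dots,n-1$) interleaved with single two-qubit gates. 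The exponential blow-up you warn about occurs only if one expands every $C^k R_x(\pi)$ in $Q_n$ independently, which is not what the paper does. The remaining content --- your phrase ``the specific ladder-level parallelization'' --- is exactly Lemma~\ref{lem:depth}, which shows by induction that $\prod_{k=2}^{n}P_k$ can be scheduled in depth $2n-3$ because consecutive ladders act on staggered qubit pairs; you assert that such a scheduling exists but never exhibit it. Your treatment of $P_n(U)$, $P_n(U)^\dagger$ and the central controlled gate matches the paper's, and your observation that replacing $R_x(\pi/2^{n-k+1})$ by $\sqrt[2^{n-k+1}]{U}$ leaves the dependency graph (hence the depth analysis) unchanged is the same reduction the paper relies on. In short: your argument is sound if one accepts the linear depth of the $Q_n$ ladder from the cited works, but the paper's self-contained proof via the $Q_{n-1}^\dagger Q_{n-1}$ cancellation and Lemma~\ref{lem:depth} is what actually substantiates that claim (and yields the explicit constant $8n-12$); you should either reproduce that argument or at least retract the claim that recursing on Eq.~(\ref{eq:cru}) cannot work.
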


\begin{proof}
$P_n(U)$ and $P_n(U)^\dagger$ can be implemented with controlled-NOT and single qubit gates in a circuit with linear depth.
We only need to verify if $Q_n$ can be implemented in a linear depth circuit. We can decompose $Q_n$ as
$$Q_n = Q_{n-1}C^{n-1}R_x(\pi),$$
and replacing $C^{n-1}R_x(\pi)$ with the decomposition in Theorem~\ref{thm:1},
$C^{n-1}R_x(\pi) = Q_{n-1}^{\dagger}P_{n-1}^\dagger Q_{n-1} (a_1 R_x(\frac{\pi}{2^{n-1-1}}) a_n) P_{n-1},$
$Q_{n-1}^\dagger$ will cancel the action of $Q_{n-1}$ and we obtain
$$
Q_n = P_{n-1}^\dagger Q_{n-1} \left(a_1 R_x\left(\frac{\pi}{2^{n-1-1}}\right) a_{n}\right) P_{n-1}.
$$

Recursively replacing $Q_{n-1}$ in the last decomposition of $Q_{n}$, we obtain 
\begin{equation*}
\begin{split}
Q_n= P_{n-1}^\dagger P_{n-2}^\dagger Q_{n-2} \left(a_1 R_x\left(\frac{\pi}{2^{n-2-1}}\right) a_{n-1}\right) P_{n-2} \\ \left(a_1 R_x\left(\frac{\pi}{2^{n-1-1}}\right) a_{n}\right) P_{n-1}.
\end{split}
\end{equation*} 
The structure of $Q_n$ after the $i$th decomposition of the operators $Q$ is described in~Eq.~(\ref{eq:qndecomposition}).
\begin{equation}
\label{eq:qndecomposition}
Q_n = \prod_{k=n-1}^{n-i} P_k^\dagger Q_{n-i} \prod_{j=n-i}^{n-1} a_1 R_x\left(\frac{\pi}{2^{j-1}}\right)a_{j+1} P_j.
\end{equation}

The last recursive call occurs with $i=n-2$. Because $Q_{n-(n-2)} = Q_2$ that contains only a $C^1R_x(\pi)$ gate, after the last recursive call, we obtain 
\begin{equation}
\label{eq:qn}
Q_n = \prod_{k=n-1}^{2} P_k^\dagger Q_2 \prod_{j=2}^{n-1} a_1 R_x\left(\frac{\pi}{2^{j-1}}\right)a_{j+1} P_j.
\end{equation}

Because $a_1 R_x(\frac{\pi}{2^{j-1}})a_{j+1} P_j$ is a $P_{j+1}$ gate, and $P_k^\dagger$ has the same depth of a $P_k$ gate, we only need to show that $\prod_{k=2}^{n-1} P_k$ has linear depth. $\prod_{k=2}^{n-1}P_k = \prod_{k=2}^{n-1} P_k(R_x(\pi))$ and Lemma~\ref{lem:depth} shows that $\prod_{k=2}^{n-1} P_k(U_k)$ has a linear depth.
\end{proof}

\begin{lem}
\label{lem:depth}
There is a circuit with single-qubit gates and controlled two-qubit gates that implements $\prod_{k=2}^n P_k(U_k)$ with a circuit depth equal to $2n-3$ with $n\geq 3$, where $U_k$ are single-qubit gates.
\end{lem}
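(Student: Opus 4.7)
The plan is to give an explicit parallel schedule of the controlled rotations that constitute $\prod_{k=2}^n P_k$ and then verify that the schedule respects both the resource constraint (each qubit participates in at most one gate per time step) and the causal constraint (non-commuting gates appear in the correct relative order). The first step is to unfold the operator: writing $g_{j,k} := a_j R_x(\pi/2^{k-j+1}) a_{k+1}$, the product consists of exactly the $\binom{n}{2}$ gates $\{g_{j,k} : 2 \le j \le k \le n\}$, where $g_{j,k}$ has control $a_j$ and target $a_{k+1}$.

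The second step is to record the commutation facts. Two controlled-$R_x$ gates commute whenever they share the target qubit (two rotations about the same axis on the same qubit commute), share the control qubit (two operations conditioned on the same bit can be swapped), or act on four distinct qubits. The only essential ordering requirement is therefore that whenever the target of one gate equals the control of another, the former must be executed first.

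The third step is to propose the diagonal schedule: at time step $t \in \{1, 2, \dots, 2n-3\}$, execute in parallel every $g_{j,k}$ with $j + k = t + 3$. The sum $j + k$ ranges from $4$ (realized by $(j,k) = (2,2)$) up to $2n$ (realized by $(n,n)$), so the schedule uses exactly $2n-3$ time slots and each gate is placed exactly once.

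The remaining step is verification, where the bookkeeping is concentrated. First I would show that distinct gates lying on the same diagonal $j+k = s$ act on disjoint qubits: the same-target and same-control cases collapse to the same gate once the sum $j+k$ is fixed, and a control-target overlap of the form $k+1 = j'$ (or symmetrically $j = k'+1$) together with the constraints $j \le k$ and $j' \le k'$ leads to a contradiction with $j+k = j'+k'$ via a short inequality chase. Then I would check causality: for a forced precedence $g_{j,k}$ before $g_{k+1, k'}$ with $k' \ge k+1$, the scheduled times satisfy $t_2 - t_1 = (k+1+k') - (j+k) = k' - j + 1 \ge 1$, so $g_{j,k}$ is placed strictly before $g_{k+1, k'}$. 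The main obstacle is the disjointness argument, since one must rule out every possible qubit collision on a diagonal; once it is in hand the causal check is almost immediate, and the depth $2n-3$ follows directly from the range of the diagonal parameter.
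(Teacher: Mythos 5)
Your proof is correct, and it reaches the result by a different route than the paper. The paper argues by induction on $n$: it takes the depth-$3$ base case $P_2P_3$ and shows that appending $P_{n+1}$ to $\prod_{k=2}^{n}P_k$ adds exactly $2$ to the depth, because all but two of the gates of $P_{n+1}$ act on qubit pairs disjoint from the correspondingly shifted gates of the preceding staircase. You instead exhibit the schedule in closed form --- place $g_{j,k}$ at time $j+k-3$ --- and verify the two properties the induction only implicitly relies on: that each anti-diagonal $j+k=\mathrm{const}$ consists of gates on pairwise disjoint qubit pairs (your inequality chase ruling out a control--target collision on a diagonal), and that the only non-commuting pairs, namely those where the target of one gate is the control of another, are scheduled in the same relative order in which they occur in the product, so the reordering does not change the implemented operator. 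This commutation analysis is a genuine addition: the paper's proof asserts that gates can be ``applied in parallel'' based on disjoint supports but never justifies why the gates may be permuted past one another in the first place. Your version also makes it easy to see that every one of the slots $t=1,\dots,2n-3$ is occupied (each value of $j+k$ between $4$ and $2n$ is realized), so the constructed circuit has depth exactly, not merely at most, $2n-3$, and the time labels you obtain match those annotated in Fig.~\ref{fig:cnu}. The paper's induction is shorter and dovetails with the recursive construction of $Q_n$ in Theorem~2; your explicit schedule is more self-contained and more rigorous. One cosmetic caveat: the direction of the forced precedence (target-gate before control-gate) presumes the product $\prod_{k=2}^{n}P_k$ is read with $P_2$ applied first; under the opposite convention the mirror-image schedule works with the same depth, so nothing essential changes.
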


\begin{proof}
We apply induction on $n$. For $n=3$, $P_2(U_2)$ has depth 1, $P_3(U_3)$ has depth 2, and $P_2(U_2)P_3(U_3)$ has depth 3 that is equal to $2\cdot 3 - 3$.

The operation $\prod_{k=2}^{n+1}P_k(U_k)$ can be decomposed as $\prod_{k=2}^{n}P_k(U_k) P_{n+1}(U_{n+1})$. With $3 \leq j \leq n$, the $j$th gate of $P_{k+1}(U_{k+1})$ operates on qubits $k+2$ and $k-j+2$ and can be applied in parallel with the $(j-2)$th gate of $P_k(U_k)$ that operates on qubits $k+1$ and $k-j+3$. 
In this way, only two gates of $P_{k+1}(U_{k+1})$ cannot be applied in parallel with other gates of  $\prod_{k=2}^{n}P_k(U_k)$ and $depth(\prod_{k=2}^{n+1}P_{k}(U_k))= depth(\prod_{k=2}^{n}P_k(U_k))+2$. 
By the induction hypothesis $depth(\prod_{k=2}^{n+1}P_k(U_k)) = 2n-3+2 = 2(n+1)-3$.
\end{proof}

We can implement $P_n(U)^\dagger Q_n (a_1 \sqrt[2^{n-1}]{U} a_{n+1}) P_n(U)$ in a circuit with depth $(2(n+1) - 3) + (2n-3)$. With the decomposition of $Q_n$ presented in Eq.~\ref{eq:qn} the gates $P_n(U)^\dagger \prod_{k=n-1}^{2}P_k^\dagger$ have depth $2n-3$,  and  the remaining gates 
$$Q_2 \prod_{j=2}^{n-1} a_1 R_x\left(\frac{\pi}{2^{j-1}}\right)a_{j+1} P_j P_n(U)$$ 
are equivalent to $\prod_{j=2}^{n+1}P_k(U_k)$ with $U_2= R_x(2\pi)$, $U_{n+1}=U$,  $U_k = R_x(\pi)$ for $k\not\in \{2, n+1\}$ and depth $(2(n+1) - 3)$.  

$Q_n^{\dagger}$ is equivalent to $(\prod_{k=n-1}^{2}P_k\prod_{k=2}^{n}P_k(U_k))^\dagger$ with $U_2=R_x(2\pi)$ and $U_k=R_x(\pi)$ for $k\neq 2$ and can be implemented in a circuit with depth $(2(n-1)-3) + 2n-3$. The overall depth of a $C^nU$ gate is $8n-12$ or $8m-20$, where $m=n+1$ is the number of qubits in $C^nU$.
Figure~\ref{fig:cnu} shows the decomposition of a $C^5U$ gate.

We show that an $n$-qubit controlled single-qubit unitary gate can be decomposed into a circuit with a linear depth of two-qubit controlled gates. Each two-qubit controlled gate requires a fixed number of single-qubit and CNOT gates \cite[Lemma 5.1]{barenco1995elementary}; then the circuit can be decomposed with a linear depth of single-qubit and CNOT gates. 

\begin{figure*}[t]
    \centering
    
    \includegraphics[width=\textwidth]{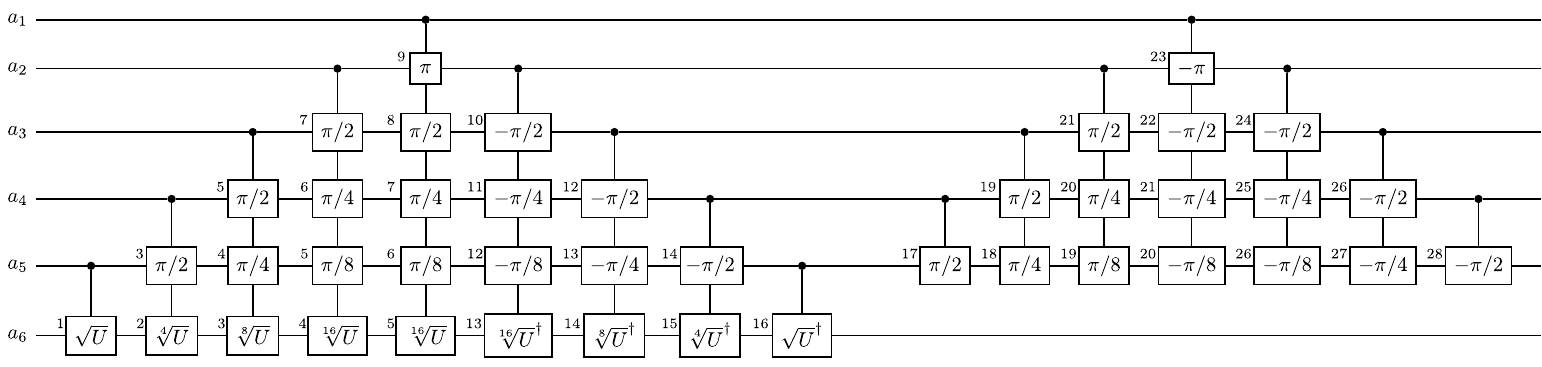}

    \caption{A linear-depth decomposition of a $C^5U$. The number $\pi/k$ in boxes indicate the angle of single-qubit rotation around the x-axis of the Bloch sphere (i.e. $R_x(\pi/k)$). The circuit modification to apply a $C^nU$ gate occurs in the last qubits where we apply a $\sqrt[k]{U}$ instead of a $R_x(\pi/k)$ and $\sqrt[k]{U}^\dagger$ instead of a $R_x(-\pi/k)$. The numbers in the left side of each gate corresponds to the time step at which the corresponding gate is applied. Note that setting $U=X$ produces a 6-qubits Toffoli gate.}
    \label{fig:cnu}
\end{figure*}

\section{Numerical analysis}
\label{sec:simulation}
We evaluate the depth of a quantum circuit produced by the LDD for a $C^{n}U$ gate with a randomly selected $2 \times 2$ unitary gate with the total number of qubits, $n+1$, ranging from 2 to 13. To demonstrate the advantage of LDD, we compare the results against those produced by the default implementation of a multiqubit controlled unitary operation in \texttt{qiskit} and by the algorithm for uniformly-controlled one-qubit gates (i.e. multiplexer)~\cite{PhysRevA.71.052330}. The latter is also implemented with \texttt{qiskit}. The results are plotted in Fig.~\ref{fig:2}. In the figure, the default \texttt{qiskit} implementation, uniformly-controlled gates, and our LDD algorithm are indicated by the diamonds, circles, and squares and are labelled as Qiskit (naive), Qiskit (uc), and Linear, respectively. The figure also shows two different cases that we tested, first without any further circuit optimization (solid lines) and second with another layer of circuit optimization provided by the transpilation tool in \texttt{qiskit} (dashed lines). 
\begin{figure}[ht]
    \centering
    \includegraphics[width = 0.9\columnwidth]{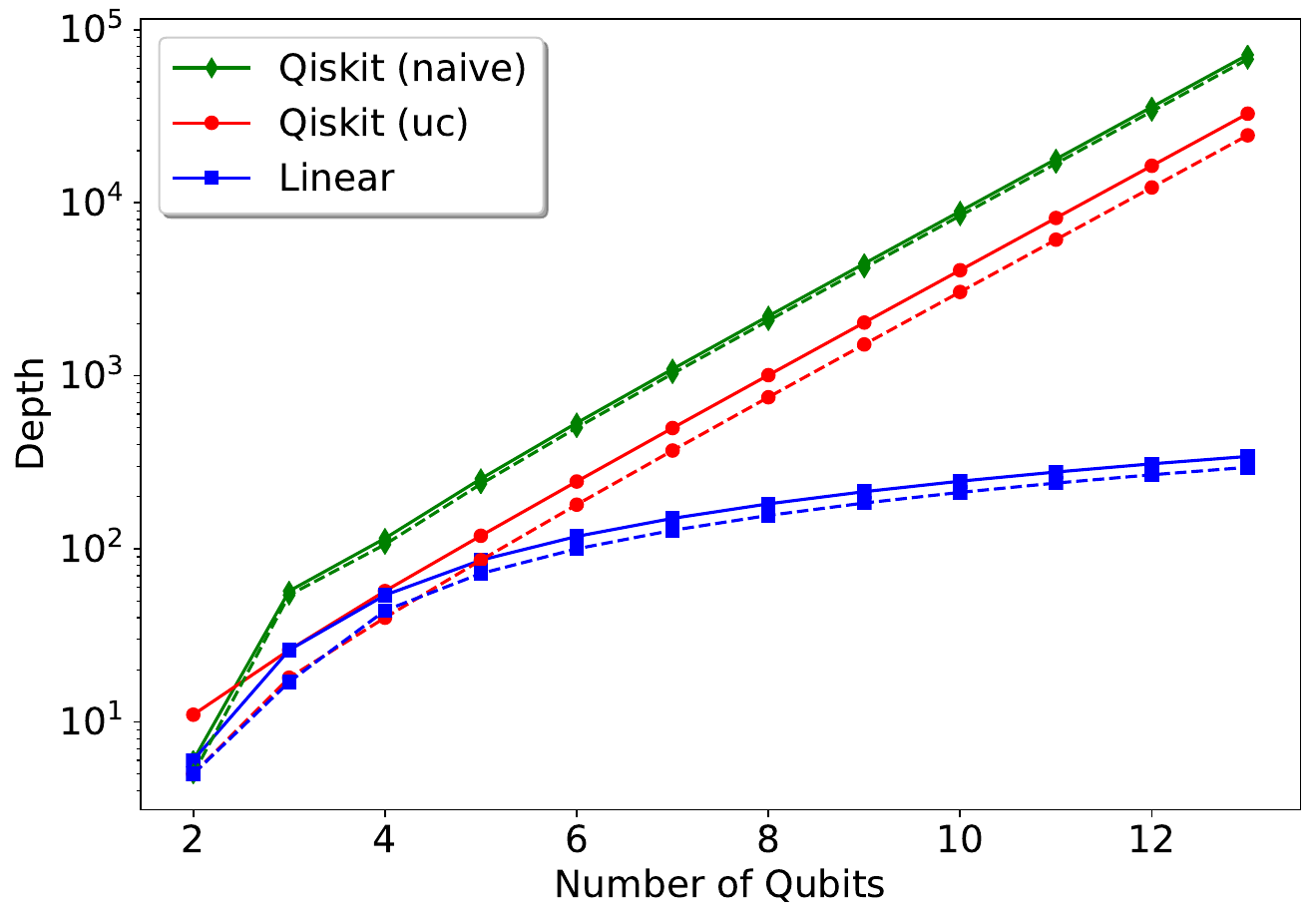}
    \caption{\label{fig:2}Numerical analysis of the depth of quantum circuits produced by three different gate decomposition methods: the default implementation of multiqubit controlled unitary operation in \texttt{qiskit} and by the algorithm for uniformly-controlled one-qubit gates. The resulting circuit depths from these cases are shown with the diamond, circle, and square symbols, respectively. The solid lines are obtained without any optimization, and the dashed lines are obtained after circuit optimization provided by Qiskit (with optimization level 3).}
\end{figure}

The numerical analysis confirms that the LDD algorithm is more economical than the previous methods in terms of the quantum circuit depth. The quantum circuit depth increases linearly with the number of qubits for LDD, and exponentially for other methods. This is especially advantageous when the number of qubits in the system is more than four.

\section{Computational experiments}
\label{sec:exp}
Minimizing the quantum circuit depth is crucial for noisy intermediate-scale quantum (NISQ) computing~\cite{Preskill2018quantumcomputing} without fault tolerance and error correction. To demonstrate the advantage of our LDD for implementing quantum algorithms on a noisy quantum hardware, we performed two proof-of-principle computational experiments on IBM cloud quantum devices. The first experiment tests $C^{n}X$. Since quantum process tomography becomes extremely costly as the number of qubits increases~\cite{NC00}, we use a simpler experiment that allows us to compare different decomposition methods. Namely, we design a quantum circuit that estimates
\begin{equation}
\label{eq:exp1}
    |\langle 11\ldots 11| C^{n}X |11\ldots 10 \rangle |^2,
\end{equation}
which should be 1 in the ideal case. The quantum circuit for testing Eq.~(\ref{eq:exp1}) is shown in Fig.~\ref{fig:circuit} (a). Similarly, the second experiment is designed to estimate
\begin{equation}
\label{eq:exp2}
    |\langle 11\ldots 11| C^{n}U (I^{\otimes n}\otimes U^{\dagger}) |11\ldots 11 \rangle |^2,
\end{equation}
where $I^{\otimes n}$ indicates the identity operation applied to $n$ qubits and $U$ is a randomly chosen $2 \times 2$ unitary gate. The above probability should be 1 in the ideal case. The quantum circuit for testing Eq.~(\ref{eq:exp2}) is shown in Fig.~\ref{fig:circuit} (b). Note that for both experiments, if a completely depolarizing channel is applied instead of the desired gate operation, then the above probabilities become $1/2^{n+1}$.

\begin{figure}[ht]
    \centering
    \includegraphics[width = 0.9\columnwidth]{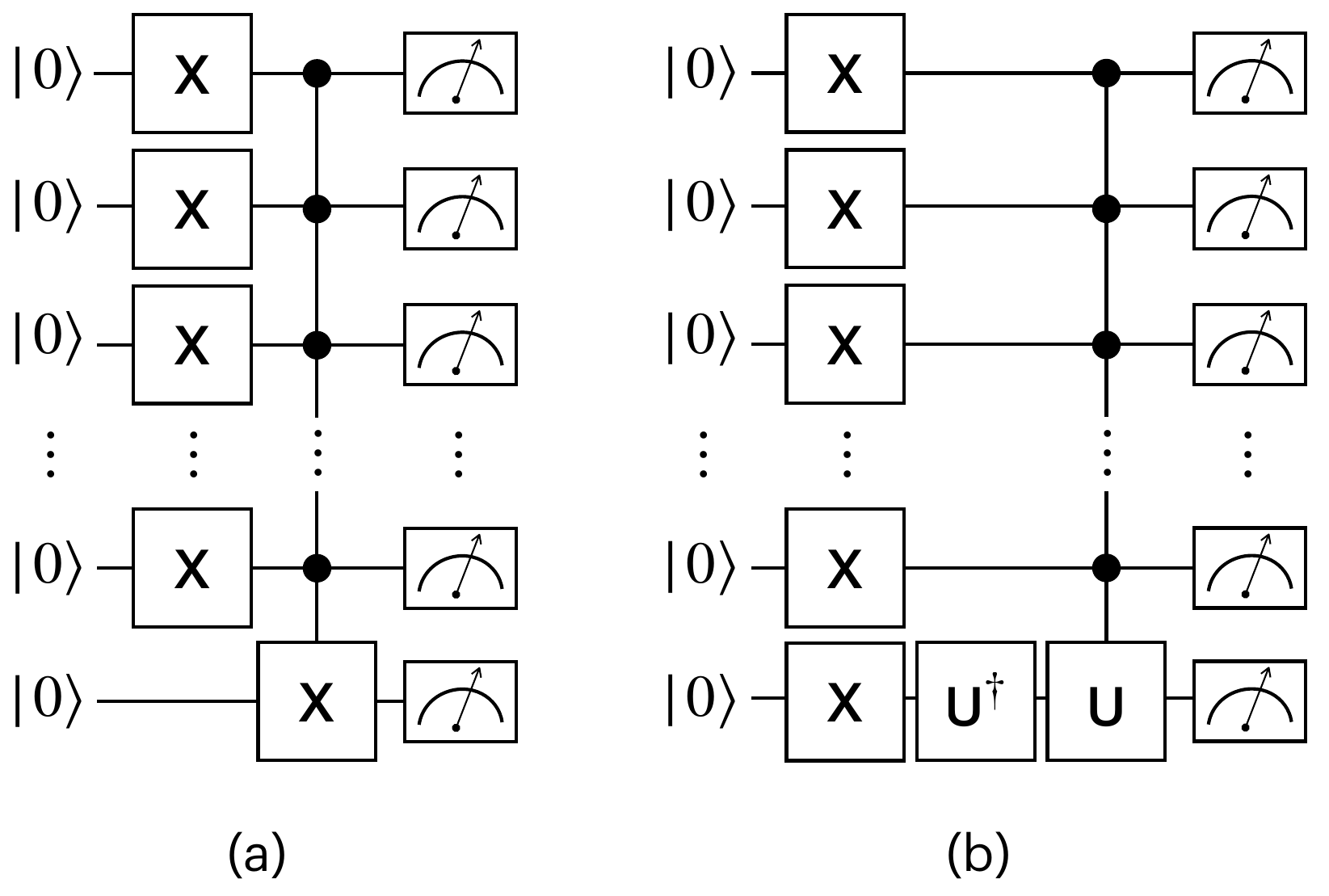}
    \caption{\label{fig:circuit} Quantum circuits tested in proof-of-principle experiments.}
\end{figure}

The two experiments described above were executed on IBM quantum devices through the IBM Quantum cloud service~\cite{IBMQE}. The first experiment was implemented on \texttt{ibm\_hanoi}, a 27-qubit Falcon r5.11 processor, and the second one was implemented on \texttt{ibmq\_guadalupe}, a 16-qubit Falcon r4p processor. 
The selection was based on the amount of queue on the cloud service at the time of execution. Quantum volumes~\cite{PhysRevA.100.032328} for these devices are 64 and 32, respectively, and the device diagrams with qubit connectivities are provided as insets in Fig.~\ref{fig:experiment}. Typical average error rates in \texttt{ibm\_hanoi} are $1.9\times 10^{-4}$ for single-qubit Pauli X gates, $4.4\times 10^{-2}$ for CNOT gates and $1.8\times 10^{-2}$ for readout. The average T$_1$ and T$_2$ relaxation times typically measured on this device are 175 $\mu$s and $140$ $\mu$s, respectively. Typical average error rates in \texttt{ibm\_guadalupe} are $3.1\times 10^{-4}$ for single-qubit Pauli X gates, $9.4\times 10^{-2}$ for CNOT gates and $1.7\times 10^{-2}$ for readout. The average T$_1$ and T$_2$ relaxation times typically measured on this device are $103$ $\mu$s and $102$ $\mu$s, respectively.

Similar to the numerical analysis presented in the previous section, the quantum circuit produced by our LDD algorithm is compared with those produced by the default implementation in \texttt{qiskit} and by the algorithm for uniformly-controlled one-qubit gates. 
To execute quantum circuits on real quantum devices, they need to be decomposed further with respect to the native gate set and the qubit connectivity of the target device. Thus the final quantum circuits are obtained after performing the \texttt{qiskit} transpilation tool with optimization level 2. The probabilities shown in Eqs.~(\ref{eq:exp1}) and~(\ref{eq:exp2}) are estimated by sampling measurement outcomes from the quantum circuits shown in Figs.~\ref{fig:circuit}(a) and (b) 50000 and 32000 times, respectively, for \texttt{ibm\_hanoi} and \texttt{ibmq\_guadalupe}.
The experimental results are shown in Fig.~\ref{fig:experiment}. In the figure, the default \texttt{qiskit} implementation, multiplexer, and our LDD algorithm are indicated by triangles, circles, and squares and labelled as naive, UCG, and LDD, respectively. Due to the finite number of sampling, the probability is zero for some instances, especially when the number of qubits is large. Hence certain values are missing. To exhibit the connection between the circuit depth and the accuracy of the quantum circuit execution, Fig.~\ref{fig:experiment} also shows the quantum circuit depth.

\begin{figure*}%
\begin{subfigure}[b]{0.49\textwidth}
         \centering
         \includegraphics[width=\textwidth]{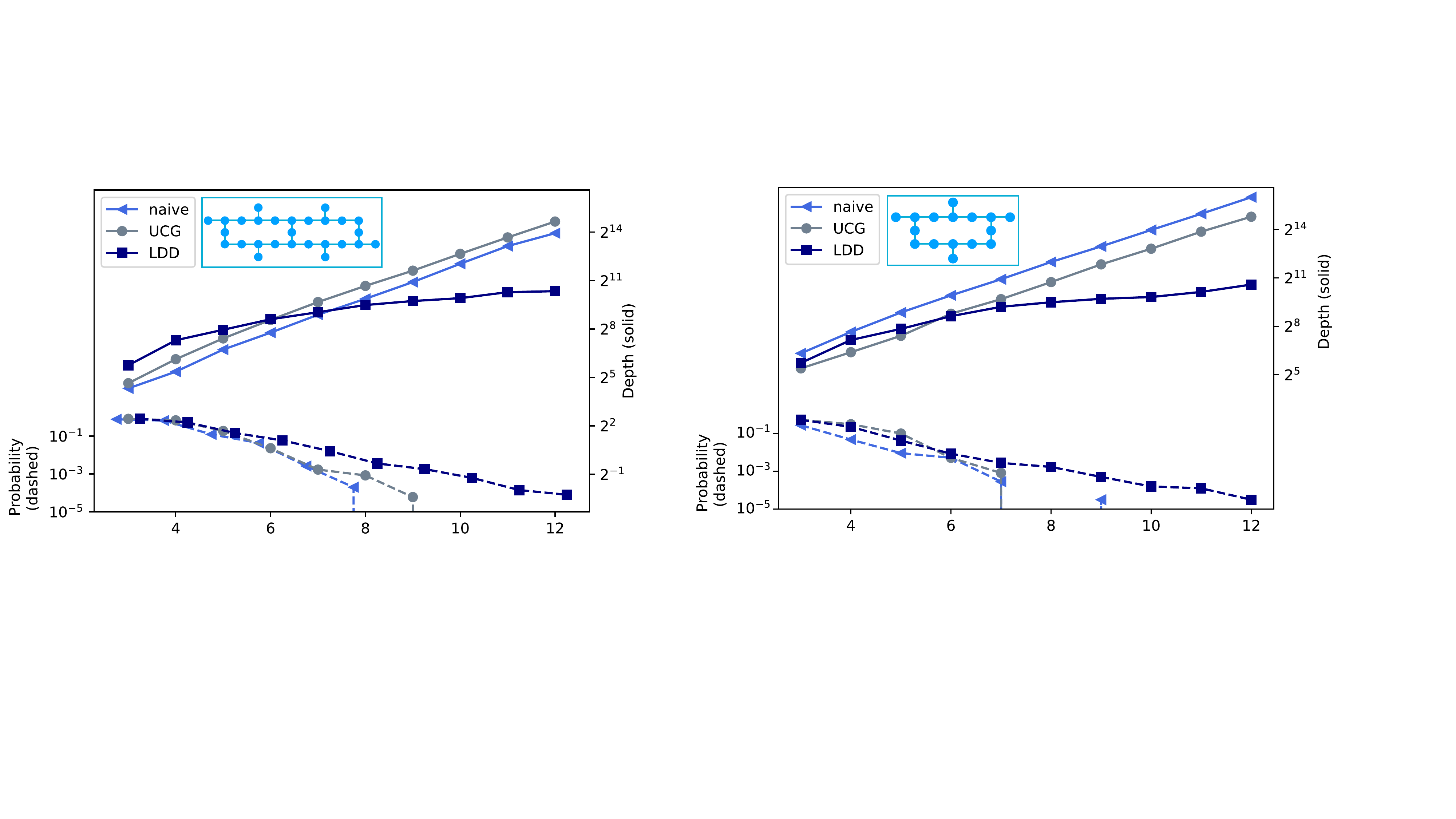}
         \caption{}
         \label{fig:a}
     \end{subfigure}
     \hfill
     \begin{subfigure}[b]{0.49\textwidth}
         \centering
         \includegraphics[width=\textwidth]{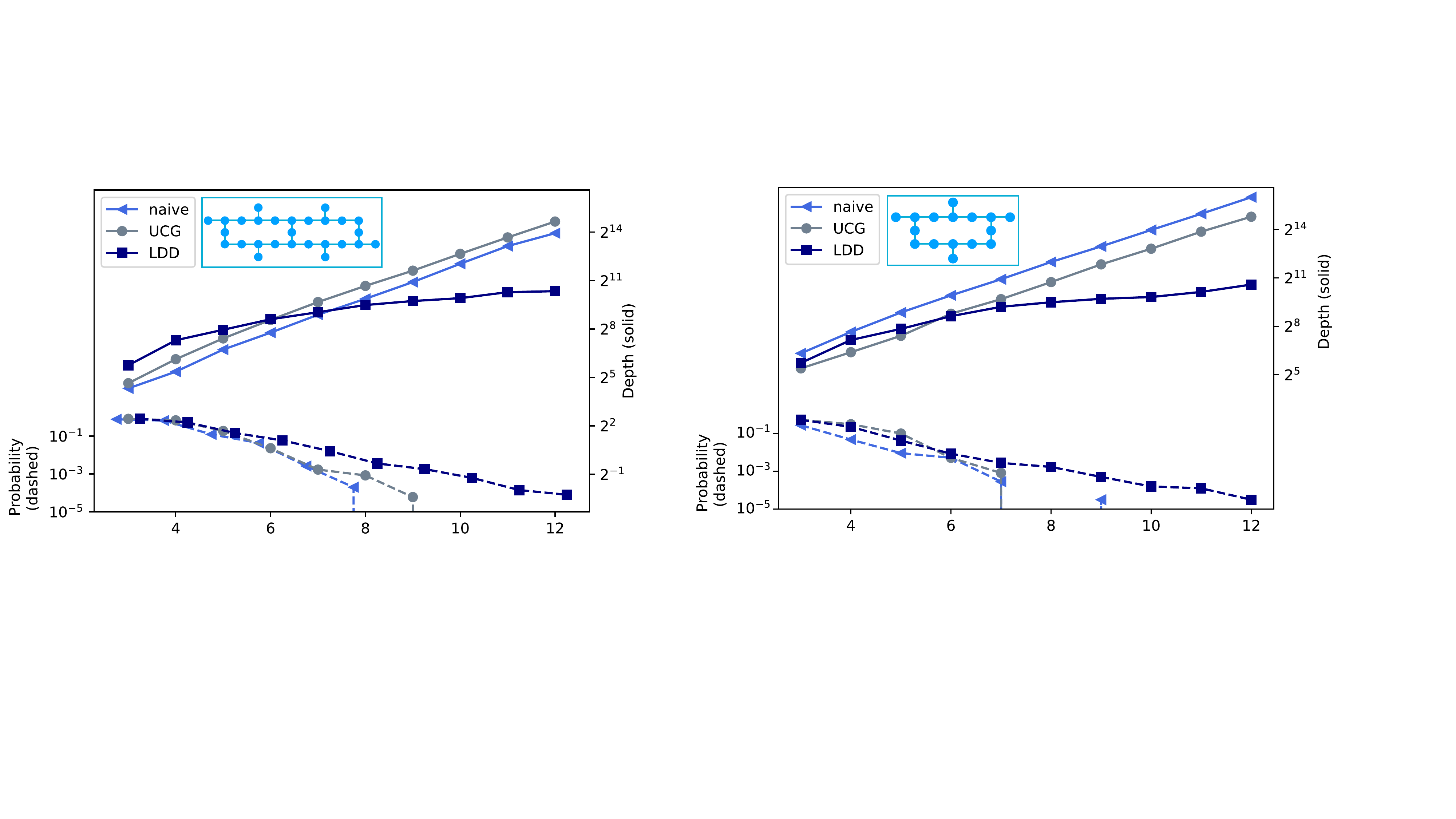}
         \caption{}
         \label{fig:b}
     \end{subfigure}
		\caption{\label{fig:experiment} Results from proof-of-principle experiments implemented on (a) \texttt{ibm\_hanoi} for $C^{n}X$ and (b) \texttt{ibmq\_guadalupe} for $C^{n}U$, where $U$ is randomly chosen from the unitary group $U(2)$. The final quantum circuits are obtained after performing the \texttt{qiskit} transpilation tool with optimization level 2. The diamond, circle, and square symbols represent the results from the default implementation of \texttt{qiskit}, uniformly-controlled one-qubit gates, and the linear-depth method. The dashed lines indicate the probabilities of measuring a string of ones, and the solid line is the circuit depth. The probabilities are obtained from 50000 and 32000 samples, respectively for (a) and (b). The solid lines vanishing towards the x-axis means that the probability is zero. For each experiment, the device diagram illustrating the qubit connectivity is shown next to the legend labels.}
\end{figure*}

As predicted by the theory and demonstrated by the numerical results shown in Fig.~\ref{fig:experiment}, the quantum circuit depth of the LDD algorithm increases much more slowly with the number of qubits than that of previous methods. For $C^{n}X$, the LDD method yields a longer quantum circuit depth than the other methods for small $n$ until the break-even point at $n=6$. Afterwards, the quantum circuit depth is shallower than the other methods. Similarly, for $C^{n}U$, the LDD method yields a longer quantum circuit depth than the multiplexer until the break-even point at $n=6$. Beyond this point, the quantum circuit depth is shallower than the other methods. Consequently, the LDD method begins to produce higher probability than those of the other methods after the break-even point. The proof-of-principle experiments confirm that the quantum circuit depth reduction achieved by our work can help achieve more accurate executions of quantum algorithms on NISQ devices.

\section{Conclusion}
In summary, we showed that an $n$-qubit controlled single-qubit unitary gate can be decomposed into a circuit with a linear depth of single-qubit and CNOT gates. The linear-depth decomposition does not require any ancilla qubits. Our method starts to outperform the circuit decomposition used in \texttt{qiskit}~\cite{cross2018ibm}, a publicly available tool, when the number of qubits is five, and the improvement increases with the number of qubits. 
Through numerical analysis and proof-of-principle experiments performed on the IBM quantum cloud platform, we verified the advantage of the proposed method.  

Controlled operators are the basic building block of quantum algorithms, such as the implementation of isometries~\cite{iten2016quantum}, quantum machine learning~\cite{htc}, quantum finance~\cite{blank_quantum-enhanced_2021}, and state preparation~\cite{plesch2011quantum}. A more efficient controlled operation should allow improvements in several quantum computing applications. A possible future work is to use this alternative decomposition to investigate how to reduce the depth of a sequence of multiqubit controlled gates, improving the application of quantum multiplexers.

\section*{Data availability}
An implementation of the proposed method is publicly available at \url{https://github.com/qclib/qclib/blob/master/qclib/gates/mc_gate.py}.

\section*{Acknowledgment}

This work is supported by Brazilian research agencies Conselho
Nacional de Desenvolvimento Científico e Tecnológico - CNPq (Grant No. 308730/2018-6), Coordenação de Aperfeiçoamento de Pessoal de Nível Superior (CAPES) - Finance Code 001 and Fundação de Amparo à Ciência e Tecnologia do Estado de Pernambuco - FACEPE (APQ-1229-1.03/21). D.K.P. acknowledges support from the National Research Foundation of Korea (Grant No. 2019R1I1A1A01050161 and No. 2022M3E4A1074591) and the KIST Institutional Program (2E31531-22-076). We acknowledge the use of IBM Quantum services for this work. The views expressed are those of the authors, and do not reflect the official policy or position of IBM or the IBM Quantum team.

\end{document}